\documentclass{article}
\usepackage[utf8]{inputenc}

\usepackage[nottoc]{tocbibind}
 \usepackage{verbatim} % comentarios
 \usepackage{amsmath}
\usepackage{amssymb}
\usepackage{amsmath}
\usepackage{amsthm}
\usepackage{bbm}
\usepackage{bm}
\usepackage{authblk}
\usepackage{graphicx}
\usepackage{subcaption}
\usepackage{caption}
\usepackage{color}
\usepackage[T1]{fontenc}
\usepackage[toc,page]{appendix}
\newcommand{\R}{\mathbb{R}}
\newcommand{\p}{\partial}
\usepackage[margin=1.3in]{geometry}

\title{Analytical formulation for multidimensional continuous opinion models}

\date{}
\author[1,2]{Luc\'ia Pedraza \thanks{lpedraza@df.uba.ar}}
\author[3]{Juan Pablo Pinasco}
\author[3]{Nicolas Saintier}
\author[1,2]{Pablo Balenzuela}

\affil[1]{\small{Departamento de F\'isica, Facultad de Ciencias Exactas y Naturales, Universidad de Buenos Aires. Av.Cantilo s/n, Pabell\'on 1, Ciudad Universitaria, 1428, Buenos Aires, Argentina.}}
\affil[2]{\small{Instituto de F\'isica de Buenos Aires (IFIBA), CONICET. Av.Cantilo s/n, Pabell\'on 1, Ciudad Universitaria, 1428, Buenos Aires, Argentina.}}
\affil[3]{\small{Departamento de Matem\'atica, Facultad de Ciencias Exactas y Naturales, Universidad de Buenos Aires. Av.Cantilo s/n, Pabell\'on 1, Ciudad Universitaria, 1428, Buenos Aires, Argentina.}}

\theoremstyle{plain}

\newtheorem{thm}{Teorema}[section]
\newtheorem{prop}[thm]{Proposition}

\begin{document}
\maketitle

 \begin{abstract}
Usually, opinion formation models assume that individuals have an opinion about a given topic  which can change due to interactions with others. However, individuals can have different opinions in different topics 
and therefore n-dimensional models are best suited to deal with these cases. While there have been many efforts to develop analytical models for one dimensional opinion models, less attention has been paid to multidimensional ones.  In this work, we develop an analytical approach for multidimensional models of continuous opinions where dimensions can be correlated or uncorrelated. We show that for any generic reciprocal interactions between agents, the mean value of initial opinion distribution is conserved. Moreover, for positive social influence interaction mechanisms, the variance of opinion distributions decreases with time and the system converges to a delta distributed function. In particular, we calculate the convergence time when agents get closer in a discrete quantity after interacting, showing a clear difference between correlated and uncorrelated cases.

 \end{abstract}

\section{Introduction}

We commonly interact with other people either in person or through social media.  In these interactions an exchange of information can produce changes in subject's opinions. Although these processes are complex and heterogeneous, the elaboration of simplified models with statistical physic tools can help us to understand some of the observed phenomena through simple rules.

Opinion representation is a main ingredient in these models. Binary or discrete opinion spaces are used to represent an opinion as a set of options (for example a yes/no decision or a candidate for an election) \cite{Clifford,Holley,Liggett,Cox,Sire}. In continuous opinion models, when any intermediate value between two extremes is allowed, opinion is usually represented as a real number in a finite interval which represent the orientation between two given extremes.

Classical continuous models were grounded in different kinds of microscopic interactions. These models usually consider a discrete time, and at each time agents change their opinion influenced by others. DeGroot \cite{DeGroot} proposed one of the first models, where each agent change its opinion according to the weighted mean of other agents opinions. This is a deterministic, discrete, Markov chain which converges to a single consensus opinion if the transition matrix satisfies the irreducibly condition \cite{ross}. In \cite{D, D2} the authors proposed a model in which two given agents interacts if their opinion are closer than a threshold (bounded confidence hypothesis), and once they interact their opinions update proportionally to the difference between them. On the other hand, in  \cite{HK, Slanina} it is assumed that in each interaction an agent considers its surrounding  and take a mean of these  agents' opinions. Similar models without the proximity condition, or bounded confidence hypothesis,  are closer to the original model of DeGroot. These  models  were  extensively  studied  in last years \cite{Amblard-Deffuant, Deffuant-2006,Deffuant-2002, Lorenz-2008, Lorenz-2010}. They could be analyzed either with agent-based dynamics for a finite number  of $N$ interacting agents, or by using partial differential equations governing the evolution of a density function that represents the distribution of agents in the opinion space. These partial differential equations can be obtained as approximations of Boltzmann type equations modelling the agents interactions, as was shown in  \cite{Bell,PaTo, Tosc}.

An usual outcome in this kind of models is the formation of consensus. The consensus time in one dimensional models has been widely studied \cite{ANT,PPS, PPSS,PSB,Vazquez,Tyson,paperopinionFitness2}. However, there are few works studying this problem in higher dimensions.

In certain circumstances, a one dimensional opinion space is not enough since there are different topics or aspects of the opinion that are relevant. In political science, the traditional left–right distinction began to fail short to explain other ideologies. In 1971 David Nolan \cite{Nolan} created a political spectrum diagram charting political views along two axes, representing economic and personal freedom. Bryson and and McDill \cite{Bryson} proposed a similar diagram to classify political positions where coordinate on the vertical axis represents the degree of governmental control advocated (statism vs. anarchy), and position on the horizontal axis represents the degree of egalitarianism favored (left vs. right). This bi-dimensional classification lead to some quiz which enabled people to plot personal or political parties beliefs on a chart \cite{quiz1,quiz2}. This kind representations were also used in recent years by data-driven framework to classifies political opinion of society \cite{Krassa,FedeAlbanese,Falck}.

Multi-dimensional models are useful to represent and study political representation in this context. Axelrod classical model uses a multivariate sociological representation, even though is a cultural diffusion model \cite{Axelrod}. When opinions are represented in multidimensional topic space, correlation between components could play a key role in collective emergent states.
In \cite{Baumann}, the authors present a multidimensional model where collective state arise as a function of social influence and correlation between topics. This correlation is related to the influence that an opinion can exert on the opinion of another individual on another topic. The difference of correlation leads to difference in the steady state. Huet and Deffuant \cite{HD} also presented a bi-dimensional opinion model where they analyse the evolution using numerical simulations, observing that the cluster formation process is similar to the one dimensional model. Also in \cite{Laguna} the authors presents simulations for a multidimensional opinion model with for binary opinions. 

However, only few works formulated a general analytical framework for multidimensional opinions models. In \cite{Motsch_Tadmor} the authors prove the existence of consensus for an attractive multidimensional interaction. Noorazar \cite{Noorazar} raises a general variational model for opinion dynamics based on pairwise interaction potentials that comprise multidimensional models, but only get the master equation and study the dynamic for two and three agents with opinion in one dimension. In \cite{boudin} they found kinetic equations in higher dimensions for peer agents interactions (adding the role of media) and prove the existence and uniqueness of the solution, but they do not solve the dynamic.  
Even though the efforts observed in last years to analyze multidimensional opinion models, a complete analytical N-dimensional study  is still pending.

In this work, we propose a Boltzmann type equation for a general interaction model of opinion dynamics in a multidimensional space. The use of measure-valued solutions \cite{AzcS, PPSS} of these differential equations enable us to study precisely the behavior of the system with finitely many agents, as well as their limit when  $N \rightarrow \infty$, for small changes of opinion. Our main contributions can be summarized as follows:

i) We prove that for symmetrical interactions the mean value is preserved, and agents converge to this opinion if interactions are also attractive.

ii) We deduce an explicit analytical expressions for the evolution of variance of the opinion density function in the case where agents are attracted a finite quantity $h$ after interact in a n-dimensional opinion space. In particular, we compare the cases where the attraction is in the direction that shortens their distance (correlated opinion components) with the case where agents approach each other a distance $h$ independently in each coordinate. We show that the variance decreases quadratically in both cases.

iii) We obtain the consensus time, $t_c$ for the same specific interaction functions and found that the time of convergence is $8/5$ times faster in the uncorrelated model.

The paper is organized as follows. In the next section, we describe the rules of the models. In Section \ref{sec:ME} we obtain the kinetic equations for the agent distribution and we prove the main mathematical properties of the model related to  their mean values and the dispersion of agents and  we study the consensus time and we perform numerical simulations for correlated and uncorrelated models.

\section{Description of the model}

We consider a set of  $N$ agents where each one can have an opinion in several topics, and therefore its state is represented by an opinion vector in a $\R^n $ space. Opinions can only change  by interactions between pair of agents and are modeled as follows: at each time step $\Delta t = 1/N$, we select at random two agents in the population having opinions $\bm{x_i}$ and $\bm{x_j}$ and then update their opinions following the rule  described below:

\begin{equation}\label{UpDateRule}
\begin{array}{cc}
\bm{\tilde x_i}&=\bm{x_i}+h \bm{I}(\bm{x_j}-\bm{x_i}) \\
\bm{\tilde x_j}&=\bm{x_j}+h \bm{I}(\bm{x_i}-\bm{x_j}), 
\end{array}
\end{equation}
where  $\bm{\tilde x_i}$ and $\bm{\tilde x_j}$  denote their new opinion vectors,  $\bm{I}:\bm{z}\in\R^n \rightarrow \bm{I(z)}\in\R^n$ is a general interaction rule between two agents whose opinions differ in $\bm{z}$, and $h \in \R$ represents the opinion's shift. As can be seen, these equations  generalises a broad range of opinion's models. We suppose that $\bm{I}$ is an odd function, which corresponds to anti-symmetrical (both get identically attracted or repelled) interaction between any pair of agents. Moreover, if $\bm{z}\bm{I}(\bm{z)}<0$ it corresponds to attractive interactions, which has been widely studied  \cite{Abelson} either by mechanisms of persuasion \cite{Akers,Vinokur}, imitation \cite{Akers} or social pressure \cite{Asch,Homans,Sherif}.

This formulation also includes models where opinion's coordinates can be correlated, for instance, if the opinion update is produced in the absolute distance in the n-dimensional space. Opinion's shift are 
independent in each coordinate in the uncorrelated model. A more extensive analysis of this model will be describe in subsection \ref{time}.  

In figure (\ref{interaction}) we show some  examples of interaction functions in the $\R^2$ space. In panel (i) we show that an interaction function as $\bm{I}\bm{(z)}=\frac{\bm{z}}{\bm{|z|}}$ generates a symmetrical attraction in the direction between two given opinions. In panel (ii) we show similar interaction  with uncorrelated components: $I(\bm{z})=(\frac{z_x}{|z_x|},\frac{z_y}{|z_y|})$ in each opinion axis. In panel (iii) the interaction is similar to the first one, but repulsive:  $\bm{I}(\bm{z})=-\frac{\bm{z}}{\bm{|z|}}$, meanwhile in (iv) we show an example of interaction function where opinion update is proportional to the distance between interacting agents:  $\bm{I}(\bm{z})=\bm{z}$ 

\begin{figure}[htp] 
\includegraphics[width=0.9 \textwidth]{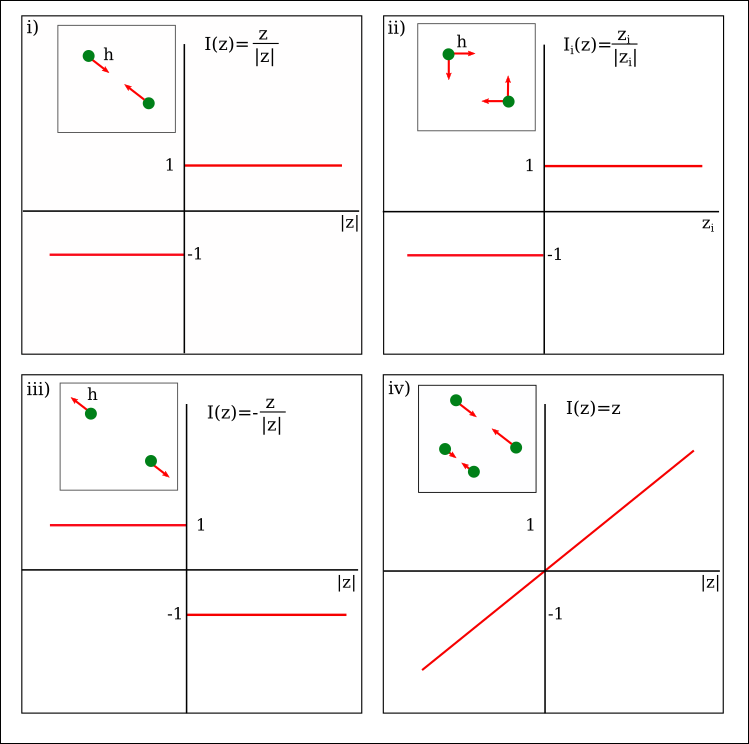}
\centering
\caption{\label{interaction} (i) $\bm{I}(\bm{z})=\frac{\bm{z}}{\bm{|z|}}$ generates a symmetrical attraction in the direction between two given opinions. (ii) Similar interaction  with uncorrelated components: $I(\bm{z})=(\frac{z_x}{|z_x|},\frac{z_y}{|z_y|})$. (iii) Similar to the first one, but repulsive:  $\bm{I}(\bm{z})=-\frac{\bm{z}}{\bm{|z|}}$ .(iv) The opinion update is proportional to the distance between interacting agents: $\bm{I}(\bm{z})=\bm{z}$. 
}
\end{figure}

\section{Master equations}\label{sec:ME}

%\ref{sec:ME}
To write down the master equations, we will follow the evolution of the opinion distribution in time, $f(t,\bm{x})$

The evolution of the opinion density function satisfies a master equation of the form
\begin{equation}\label{EcuacionMaestra}
\frac{\partial f}{\partial t}(t,\bm{x}) =2 \Big(G(\bm{x})-L(\bm{x})\Big),
\end{equation}
where $G(\bm{x})$ and $L(\bm{x})$ are the gain  and  loss functions respectively.  The $2$ factor represents the probability an agent changes its opinion at a time. 

Sometimes, it is easier to deal with the weak formulation of a given equation and also it allows that $f(t,\bm{x})$ be any type of measure-valued  not necessarily functions.
In Appendix \ref{ap:Weak Formulation} we deduce the weak formulation of Eq. \eqref{EcuacionMaestra} which reads as follow:

\begin{equation}\label{WeakForm}
\begin{split}
& \frac{d}{dt}\int_{\bm{x}} \phi(\bm{x})f(t,\bm{x})d\bm{x} \\ 
& = 2 \iint_{\bm{x},\bm{x_*}} 
[\phi(\bm{\bm{x}}+h I(\bm{\bm{x}_*}-\bm{x}))-\phi(\bm{x})]
f(t,\bm{x})f(t,\bm{x}_*) d\bm{x} d\bm{x}_*.
\end{split}
\end{equation}
for any function $\phi(\bm{x}): \R^n \rightarrow  \R^n$.
 
This expression is useful to obtain the dynamic for any observable values $\phi$.  For instance, if $\phi(\bm{x})=\bm{x}$, Eq. \eqref{WeakForm} gives the evolution of the geometric center of the opinion density function, and $\phi(\bm{x})=|\bm{x}|^2$ gives its the second moment.

In particular, Eq. \eqref{WeakForm} shows a simpler version in the limit $h\ll 1$:

\begin{equation}\label{WeakFormApprox}
\begin{split}
 \frac{d}{dt}\int_{\bm{x}} \phi(\bm{x})f(t,\bm{x})d\bm{x}
& =2h \iint_{\bm{x},\bm{x_*}}  \nabla\phi(\bm{x})I(\bm{\bm{x}_*}-\bm{x})f(t,\bm{x})f(t,\bm{x}_*)
d\bm{x} d\bm{x}_*.
\end{split}
\end{equation}

It will be useful for future sections to obtain the weak formulation for a time dependent function test  $\phi(t,\bm{x}): (\R,\R^n) \rightarrow  \R^n$ (See Appendix \ref{ap:Weak Formulation} for details). 

\begin{equation}\label{WeakFormApprox_t}
\begin{split}
 \frac{d}{dt}\int_{\bm{x}} \phi(t,\bm{x})f(t,\bm{x})d\bm{x}
& = \int_{\bm{x}} \frac{\partial\phi}{\partial t}(t,\bm{x}) f(t,\bm{x})d\bm{x}+\\
&2h \iint_{\bm{x},\bm{x_*}}  \nabla\phi(t,\bm{x})I(\bm{\bm{x}_*}-\bm{x})f(t,\bm{x})f(t,\bm{x}_*)
d\bm{x} d\bm{x}_*.
\end{split}
\end{equation}

\subsection{Evolution of first and second moment of the opinion density function}

In this section, we will focus in the asymptotic behavior of the opinion density function. In particular, we will prove that, under certain general conditions of the interaction function $I(\bm{z})$, an initial population of agents with randomly distribution opinion converge to consensus. This was done in one dimension when $I(z)=\frac{z}{|{z}|}$ in \cite{PSB}, where the authors proved that as $t\to +\infty$, $f(t,\cdot)$ converges (in a weak sense) to
the Dirac mass centered at the  mean value $m_0=\int {x}f(0,{x})d{x}$ of $f(0,\cdot)$: $f(t,\cdot) \to \delta_{m_0}$. In this section, we will generalize this result to a more general family of interaction functions and {\it n}-dimensions of the opinion vectors.

First, we will prove that, if $I(\bm{z})$ is an odd function, the mean opinion is conserved and that $f(t,\cdot)$ is symmetric if it was initially:

\medskip

\begin{prop}\label{mean_conseve}
If $I(-\bm{z})=-I(\bm{z})$  for any $\bm{z}\in\mathbb{R}^n$ (i.e. $\bm{I}$ is an odd function), then
\begin{equation}\label{MeanConserved}
  \int\bm{x} f(t,\bm{x})d\bm{x} = \int\bm{x} f(0,\bm{x})d\bm{x} \qquad \text{for any $t\ge 0$}.
\end{equation}

Moreover if $f(0,-\bm{x})=f(0,\bm{x})$ for any $\bm{x}\in\mathbb{R}^n$, then
$f(t,\cdot)$ is symmetric for any $t$.
\end{prop}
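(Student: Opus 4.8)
The plan is to read off both statements from the weak formulation \eqref{WeakForm}, using only the oddness of $\bm{I}$ together with the invariance of the product measure $f(t,\bm{x})f(t,\bm{x}_*)\,d\bm{x}\,d\bm{x}_*$ under the interchange $\bm{x}\leftrightarrow\bm{x}_*$. For the mean, I would test \eqref{WeakForm} with $\phi(\bm{x})=\bm{x}$ taken componentwise, so that $\phi(\bm{x}+h\bm{I}(\bm{x}_*-\bm{x}))-\phi(\bm{x})=h\bm{I}(\bm{x}_*-\bm{x})$ and
\[
\frac{d}{dt}\int \bm{x}\,f(t,\bm{x})\,d\bm{x}=2h\iint \bm{I}(\bm{x}_*-\bm{x})\,f(t,\bm{x})f(t,\bm{x}_*)\,d\bm{x}\,d\bm{x}_*=:2h\,J.
\]
Renaming the dummy variables $\bm{x}\leftrightarrow\bm{x}_*$ leaves $f(t,\bm{x})f(t,\bm{x}_*)\,d\bm{x}\,d\bm{x}_*$ unchanged while, by oddness, sending the integrand to $\bm{I}(\bm{x}-\bm{x}_*)=-\bm{I}(\bm{x}_*-\bm{x})$; hence $J=-J$, so $J=0$ and the derivative of the mean vanishes for every $t$, which is \eqref{MeanConserved}. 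This is a one-line symmetry argument and presents no real difficulty.

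For the preservation of symmetry I would argue by uniqueness of the measure-valued solution. Setting $g(t,\bm{x}):=f(t,-\bm{x})$, I would check that $g$ solves the same equation \eqref{WeakForm}. Testing $g$ against an arbitrary $\phi$ and substituting $\bm{x}\mapsto-\bm{y}$ on the left gives $\frac{d}{dt}\int\phi(\bm{x})g(t,\bm{x})\,d\bm{x}=\frac{d}{dt}\int\phi(-\bm{y})f(t,\bm{y})\,d\bm{y}$, which I evaluate by \eqref{WeakForm} applied to $f$ with test function $\bm{y}\mapsto\phi(-\bm{y})$. Undoing the substitution on the right via $\bm{y}\mapsto-\bm{u}$, $\bm{y}_*\mapsto-\bm{u}_*$ produces the increment $-h\bm{I}(\bm{u}-\bm{u}_*)$, which by oddness equals $+h\bm{I}(\bm{u}_*-\bm{u})$; the argument of $\phi$ therefore becomes $\bm{u}+h\bm{I}(\bm{u}_*-\bm{u})$ and the whole expression reassembles into the right-hand side of \eqref{WeakForm} for $g$.

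It then remains to compare initial data: the hypothesis $f(0,-\bm{x})=f(0,\bm{x})$ gives $g(0,\cdot)=f(0,\cdot)$, so $f$ and $g$ are two solutions of \eqref{WeakForm} with the same initial condition. Invoking uniqueness in the measure-valued framework used here yields $g\equiv f$, i.e. $f(t,-\bm{x})=f(t,\bm{x})$ for all $t$, as claimed. I expect the only delicate point to be the bookkeeping of the two sign flips in the second part --- one coming from the even reflection in $\phi(-\cdot)$ and one from the oddness of $\bm{I}$ --- which must cancel precisely for the equation to close on $g$; once this is verified, the appeal to uniqueness finishes the proof.
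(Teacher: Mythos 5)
Your proposal is correct and takes essentially the same route as the paper: conservation of the mean via the test function $\phi(\bm{x})=\bm{x}$, oddness of $\bm{I}$, and a symmetry of the product measure, followed by preservation of symmetry through showing that $g(t,\bm{x}):=f(t,-\bm{x})$ satisfies the same weak equation \eqref{WeakForm} and invoking uniqueness of the measure-valued solution. If anything, your first step is stated more carefully than the paper's, which in \eqref{demSymetric1} attributes the cancellation to the reflection $\bm{x}\to-\bm{x}$, $\bm{x}_*\to-\bm{x}_*$, whereas the exchange of the dummy variables $\bm{x}\leftrightarrow\bm{x}_*$ that you use is what actually leaves $f(t,\bm{x})f(t,\bm{x}_*)\,d\bm{x}\,d\bm{x}_*$ invariant and forces $J=-J$.
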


\begin{proof}
 Taking in \eqref{WeakForm}
the test function $\phi(\bm{x})=\bm{x}$ and using that $\nabla \phi(\bm{x})$ is the identity, we get
\begin{equation}
\begin{split}
\label{demSymetric1}
 \frac{d}{dt}\int \bm{x}f(t,\bm{x})d\bm{x}
& =2h \iint
I(\bm{\bm{x}_*}-\bm{x})
f(t,\bm{x})f(t,\bm{x}_*) d\bm{x} d\bm{x}_* \\
& =-2h \iint
I(\bm{\bm{x}}-\bm{x}_*)
f(t,\bm{x})f(t,\bm{x}_*) d\bm{x} d\bm{x}_* \\
& =-\frac{d}{dt}\int \bm{x}f(t,\bm{x})d\bm{x},
\end{split}
\end{equation}

where we used $I$ is odd and the change of variables $\bm{x}\to -\bm{x}$,
$\bm{x}_*\to -\bm{x}_*$ in the last equality. So $\frac{d}{dt}\int \bm{x}f(t,\bm{x})d\bm{x}
=0$ and the result follows.
We now suppose that $f(0,-\bm{x})=f(0,\bm{x})$ for any $\bm{x}\in\mathbb{R}^n$, and let $g(t,\bm{x}):=f(t,-\bm{x})$.
Then $\int \phi(\bm{x}) g(t,\bm{x})d\bm{x}
= \int \phi(\bm{-x}) f(t,\bm{x})d\bm{x}$
so that
\begin{equation}
\begin{split}
  \frac{d}{dt}\int \phi(\bm{x}) g(t,\bm{x})d\bm{x}
& =2h \iint
[\phi(-\bm{\bm{x}}-h I(\bm{\bm{x}_*}-\bm{x}))-\phi(-\bm{x})]
f(t,\bm{x})f(t,\bm{x}_*) d\bm{x} d\bm{x}_*.
\end{split}
\end{equation}

Changing variables $\bm{x}\to -\bm{x}$,
$\bm{x}_*\to -\bm{x}_*$, and then using that $I$ is odd again, we obtain
\begin{equation}
\begin{split}
\frac{d}{dt}\int \phi(\bm{x}) g(t,\bm{x})d\bm{x}
& =2h \iint
[\phi(\bm{\bm{x}}-h I(\bm{\bm{x}}-\bm{x}_*))-\phi(\bm{x})]
g(t,\bm{x})g(t,\bm{x}_*) d\bm{x} d\bm{x}_* \\
& =2h \iint
[\phi(\bm{\bm{x}}+h I(\bm{\bm{x}_*}-\bm{x}))-\phi(\bm{x})]
g(t,\bm{x})g(t,\bm{x}_*) d\bm{x} d\bm{x_*}.
\end{split}
\end{equation}

The transport Eq. \eqref{WeakFormApprox} is solved using the fixed point theorem of Banach, as in \cite{PPSS}, so the solution exist and is unique. Then, as $f(t,\bm{x})$ and $g(t,\bm{x})$ both solve the equation with the same initial condition, they are equal.
\end{proof}

 For attractive interactions we will investigate the long time behaviour of $f(t,\cdot)$ to see the opinion's convergence and generalize the results obtained in  \cite{PSB}.

\begin{prop}\label{Convergence2Dmodel2}
Assume that (i) $I$ is odd and also that
(ii) $\bm{z}I(\bm{z})\ge 0$ for all $\bm{z}\in \mathbb{R}^n$ with equality only for $\bm{z}=0$.
Then $f(t,\cdot)\to \delta_{\bm{m}(0)}$ where $\bm{m}(0)$ is the initial mean opinion.
\end{prop}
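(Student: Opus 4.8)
The plan is to deduce convergence to $\delta_{\bm{m}(0)}$ from the decay of the variance, since Proposition~\ref{mean_conseve} already guarantees that the mean $\bm{m}(t)=\int \bm{x} f(t,\bm{x})\,d\bm{x}$ stays fixed at $\bm{m}(0)$. Writing $M_2(t)=\int |\bm{x}|^2 f(t,\bm{x})\,d\bm{x}$ for the second moment and $V(t)=M_2(t)-|\bm{m}(0)|^2$ for the variance, it suffices to show $V(t)\to 0$: a family of probability measures with common mean $\bm{m}(0)$ and variance tending to zero converges weakly to $\delta_{\bm{m}(0)}$. First I would feed the test function $\phi(\bm{x})=|\bm{x}|^2$, for which $\nabla\phi(\bm{x})=2\bm{x}$, into the weak formulation \eqref{WeakFormApprox}.

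The key step is a symmetrization in the same spirit as the proof of Proposition~\ref{mean_conseve}. Relabelling the dummy variables $\bm{x}\leftrightarrow\bm{x}_*$ and invoking hypothesis (i) that $\bm{I}$ is odd, the two resulting expressions average to
\begin{equation*}
\frac{d}{dt}M_2(t) = -2h\iint (\bm{x}_*-\bm{x})\cdot \bm{I}(\bm{x}_*-\bm{x})\, f(t,\bm{x})f(t,\bm{x}_*)\,d\bm{x}\,d\bm{x}_*.
\end{equation*}
By hypothesis (ii) the integrand is nonnegative, so (for $h>0$) $M_2$, and hence $V$, is non-increasing; being bounded below by zero, $V(t)$ converges to some $V_\infty\ge 0$. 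Integrating in time shows that the dissipation $R(t):=\iint \bm{z}\cdot\bm{I}(\bm{z})\, f(t,\bm{x})f(t,\bm{x}_*)\,d\bm{x}\,d\bm{x}_*$ with $\bm{z}=\bm{x}_*-\bm{x}$ satisfies $\int_0^\infty R(t)\,dt<\infty$, so that $R(t_n)\to 0$ along some sequence $t_n\to\infty$.

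It remains to turn vanishing dissipation into concentration. The uniform bound $M_2(t)\le M_2(0)$ makes the family $\{f(t,\cdot)\}$ tight, so along a subsequence $f(t_n,\cdot)\rightharpoonup\mu$ for some probability measure $\mu$. Since $\bm{z}\cdot\bm{I}(\bm{z})\ge 0$ is continuous, lower semicontinuity under weak convergence of the product measures gives $\iint \bm{z}\cdot\bm{I}(\bm{z})\,d\mu(\bm{x})\,d\mu(\bm{x}_*)\le\liminf R(t_n)=0$; as the integrand vanishes only on the diagonal $\bm{x}=\bm{x}_*$, the product $\mu\otimes\mu$ must be supported there, which forces $\mu=\delta_{\bm{a}}$ for a single point $\bm{a}$. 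Conservation of the mean (Proposition~\ref{mean_conseve}) then identifies $\bm{a}=\bm{m}(0)$.

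The main obstacle is the final upgrade, from ``a subsequence concentrates'' to ``$V_\infty=0$ and the whole flow converges.'' Weak convergence to a Dirac mass does not by itself force the variance to vanish, because a vanishingly small amount of mass escaping to infinity can keep $V$ bounded away from zero while contributing nothing to the weak limit; the bound on $M_2$ alone does not rule this out. I would close this gap by establishing uniform integrability of $|\bm{x}|^2$, most naturally by showing that under an \emph{attractive} rule the support of $f(t,\cdot)$ stays inside a fixed bounded set (for $\bm{I}(\bm{z})=\bm{z}/|\bm{z}|$ and small $h$ each interaction keeps both agents within the segment joining them, so the convex hull of the support cannot grow). On a bounded domain $|\bm{x}|^2$ is bounded and continuous, so $M_2(t_n)\to\int|\bm{x}|^2\,d\mu=|\bm{m}(0)|^2$, giving $V(t_n)\to 0$; monotonicity then yields $V_\infty=0$, hence $V(t)\to 0$, and therefore $f(t,\cdot)\rightharpoonup\delta_{\bm{m}(0)}$ as claimed.
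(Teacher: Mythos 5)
Your dissipation computation coincides with the paper's own proof: the same test function $\phi(\bm{x})=|\bm{x}|^2$ in \eqref{WeakFormApprox} (the paper takes $\tfrac12|\bm{x}|^2$), the same symmetrization through the swap $\bm{x}\leftrightarrow\bm{x}_*$ and oddness of $\bm{I}$, arriving exactly at the paper's identity \eqref{Var2}. Where you genuinely part ways is in the passage from monotone variance to convergence. The paper stops after noting that the variance is non-increasing and decreases strictly while $f(t,\cdot)$ is not a Dirac mass, and then simply asserts $f(t,\cdot)\to\delta_{\bm{m}(0)}$; as you correctly sense, that inference by itself is incomplete (a quantity strictly decreasing off equilibrium may still stall at a positive limit), and your LaSalle-type completion --- time-integrability of the dissipation $R(t)$, tightness from the uniform second-moment bound, lower semicontinuity of $\bm{z}\cdot\bm{I}(\bm{z})$ along a weakly convergent subsequence, support of $\mu\otimes\mu$ on the diagonal forcing $\mu=\delta_{\bm{a}}$, and the uniform-integrability upgrade excluding escape of mass --- is precisely what is needed to make the paper's conclusion rigorous. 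Your identification $\bm{a}=\bm{m}(0)$ also goes through, since $M_2(t)\le M_2(0)$ makes $|\bm{x}|$ uniformly integrable, so first moments pass to the weak limit. Two caveats on your closing step. First, the lower-semicontinuity argument implicitly assumes $\bm{z}\cdot\bm{I}(\bm{z})$ is at least lower semicontinuous; the paper makes the same tacit regularity assumption, so it is fine provided you state it. Second, the confinement of the support that you invoke for uniform integrability of $|\bm{x}|^2$ is justified for $\bm{I}(\bm{z})=\bm{z}/|\bm{z}|$ and more generally for radial rules $\bm{I}(\bm{z})=g(|\bm{z}|)\bm{z}$ with $g\ge 0$, where at an extreme point of the support with outward normal $\bm{n}$ every contribution $\bm{I}(\bm{x}_*-\bm{x})\cdot\bm{n}$ is nonpositive; but hypotheses (i)--(ii) alone do not control the component of $\bm{I}(\bm{z})$ transverse to $\bm{z}$ (in the plane, $\bm{I}(\bm{z})=\bm{z}+\bm{z}^{\perp}$ is odd and satisfies $\bm{z}\cdot\bm{I}(\bm{z})=|\bm{z}|^2\ge 0$, yet the ``agents stay within the segment'' argument fails), so for a general admissible $\bm{I}$ the convex hull of the support can a priori grow, and you would need either an additional structural assumption (monotonicity of $\bm{I}$, or $\bm{I}$ parallel to $\bm{z}$) or a different uniform-integrability argument, e.g.\ a propagated higher-moment bound, to close this last step in full generality. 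In short: same energy identity as the paper, but a substantially more careful treatment of the convergence itself, which the paper leaves at the level of an assertion --- correct modulo making those two hypotheses explicit.
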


\begin{proof}
We assume, without loss of generality, that $f(0,\cdot)$ has  zero mean. It  follows from the previous result that $f(t,\cdot)$ has also mean zero for any $t\ge 0$.
It thus suffices to prove that
\begin{equation}\label{VarTo0}
\int |\bm{x}|^2 f(t,\bm{x})d\bm{x}\to 0
\qquad \text{as  $t \to +\infty$.}
\end{equation}
Taking $\phi(\bm{x})=\frac12 |\bm{x}|^2$ in \eqref{WeakFormApprox} gives
\begin{equation} 
 \frac{1}{2} \frac{d}{dt}\int  |\bm{x}|^2 f(t,\bm{x})d\bm{x}
 =2h\iint  \bm{x}I(\bm{\bm{x}_*}-\bm{x}) f(t,\bm{x})f(t,\bm{x}_*) d\bm{x} d\bm{x}_*.
 \end{equation}

Let us call $A=\iint  \bm{x}I(\bm{\bm{x}_*}-\bm{x}) f(t,\bm{x})f(t,\bm{x}_*) d\bm{x} d\bm{x}_*$, so that \begin{equation}\label{2A}
\frac{1}{2h} \frac{d}{dt}\int  |\bm{x}|^2 f(t,\bm{x})d\bm{x}
 = 2A
\end{equation}
Using that $I$ is odd we can also write $A$ as
\begin{equation}\label{demECM1}
\begin{split}
A & = -\iint
\bm{x}I(\bm{\bm{x}}-\bm{x}_*)
f(t,\bm{x})f(t,\bm{x}_*) d\bm{x} d\bm{x}_* \\
& =  -\iint
\bm{x}_*I(\bm{\bm{x}_*}-\bm{x})
f(t,\bm{x})f(t,\bm{x}_*) d\bm{x} d\bm{x}_*,
\end{split}
\end{equation}
where we performed the change of variables $\bm{x}_* \rightarrow \bm{x} $, $\bm{x} \rightarrow \bm{x}_* $ in the last equality. 
Using this last expression and the definition of $A$, we obtain
\begin{equation}
2A = -\iint
(\bm{x}_*-\bm{x})I(\bm{\bm{x}_*}-\bm{x})
f(t,\bm{x})f(t,\bm{x}_*) d\bm{x} d\bm{x}_*.
\end{equation}

We can thus rewrite  Eq. \eqref{2A} as
\begin{equation}\label{Var2}
\frac{1}{2h}\frac{d}{dt}
\int  |\bm{x}|^2 f(t,\bm{x})d\bm{x}
  = -\iint
(\bm{x}_*-\bm{x})I(\bm{\bm{x}_*}-\bm{x})
f(t,\bm{x})f(t,\bm{x}_*) d\bm{x} d\bm{x}_*,
\end{equation}
which is always negative by assumption (ii).
Moreover, still by (ii),
\begin{align*}\label{demECM2}
A=0
& \Longleftrightarrow &
 (\bm{x}_*-\bm{x})I(\bm{\bm{x}_*}-\bm{x})
 \qquad \text{for any $\bm{x},\bm{x}_*\in \text{supp}(f(t,\cdot))$} \\
 & \Longleftrightarrow &
 \bm{x}_*=\bm{x}
 \qquad \text{for any $\bm{x},\bm{x}_*\in \text{supp}(f(t,\cdot))$} \\
 & \Longleftrightarrow &
 \text{$f(t,\cdot)$ is  a Dirac mass.}
\end{align*}
Thus the variance $\int  |\bm{x}|^2 f(t,\bm{x})d\bm{x}$ of $f(t,.)$
is non-increasing in time, and is decreasing strictly while $f(t,\cdot)$ is not a Dirac mass.

Since the mean value of $f(t,\cdot)$ is conserved, and thus is equal to $\bm{m}(0)$,
$f(t,\cdot)$ must converge $\delta_{\bm{m}(0)}$.
\end{proof}

In figure (\ref{snapshot}) we show a simulation for $N=1000$ agents in a 2D opinion space  starting from an uniform distribution  $[-0.5; 0.5]\times [-0.5;0.5]$ and space step $h=0.01$. We show snapshots at different times of evolution for two different interaction functions. 

\begin{figure}[ht]
 \includegraphics[width=\textwidth]{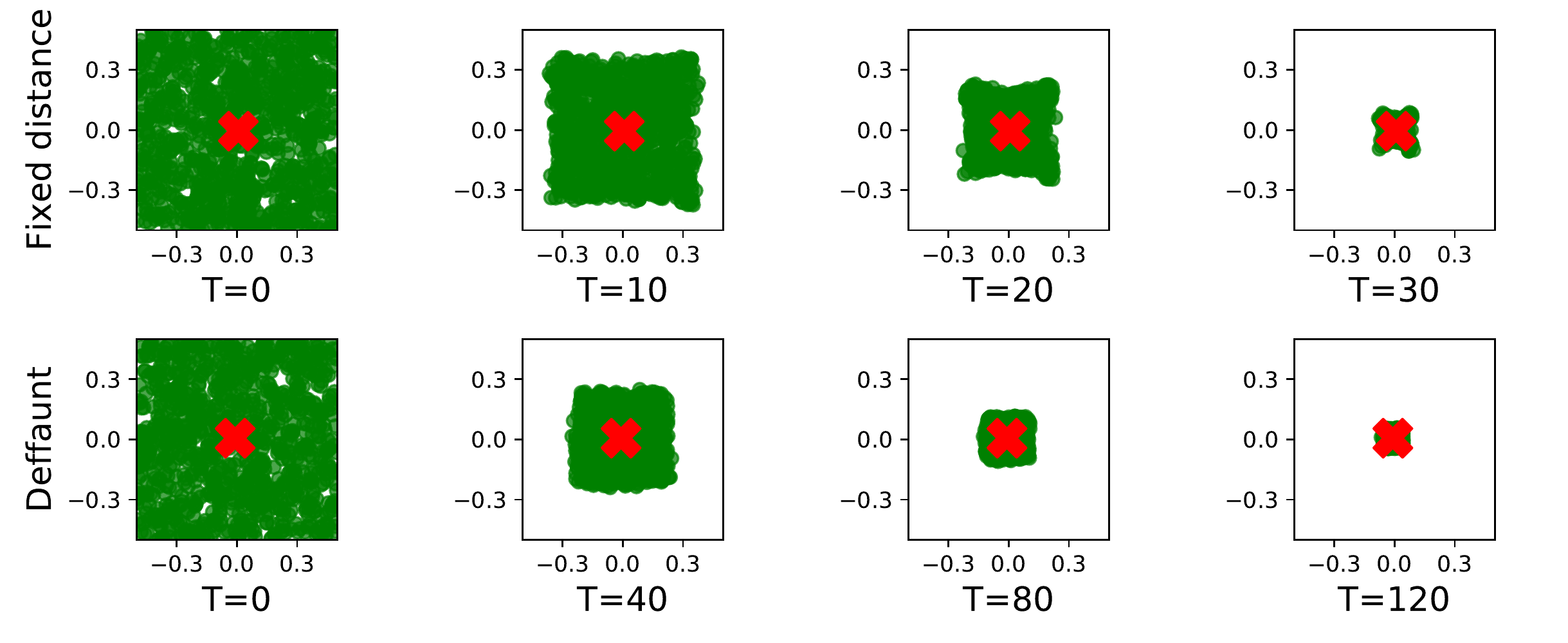}
\caption{ System evolution simulation for $N=1000$ agents, $h=0.01$ and initially uniform distribution in a 2D opinion space. Top panel use  interaction function given by $\bm{I}\bm{(z)}=\frac{\bm{z}}{\bm{|z|}}$ (generates a symmetrical attraction in the direction between two given opinions) and bottom panel $\bm{I}(\bm{z})=\bm{z}$ (opinion update is proportional to the distance between interacting agents). Red cross represent the mean opinion at each time.}
\label{snapshot}
\end{figure}

In top panels of figure (\ref{snapshot}), the interaction function is $\bm{I}\bm{(z)}=\frac{\bm{z}}{\bm{|z|}}$ which generates a symmetrical attraction in the direction between two given opinions. In bottom panels the interaction function is $\bm{I}(\bm{z})=\bm{z}$, where opinion update is proportional to the distance between interacting agents. We can see that both interaction functions are odd an fulfill the condition $\bm{z}I(\bm{z})\ge 0$ for all $\bm{z}\in \mathbb{R}^2$ with equality only for $\bm{z}=0$, thus making the distribution of opinions converge toward the mean opinion (marked with a red cross) in agreement with Proposition \ref{Convergence2Dmodel2}.

\subsection{Time to consensus }\label{time}

In this section, we deduce an analytical expression for the evolution of the second moment of the opinion density function which will also allow us to obtain  the time needed by the collective of agents to reach consensus, $t_c$.
We focus on a particular interaction function, $\bm{I(z)}=\frac{\bm{z}}{|\bm{z}|}$,  which produces that every time two agents interact, they approach a distance $h$ in the direction that shortens their distance (note that in this case, both opinion coordinates are correlated). We will compare the obtained results with the case where opinion coordinates update in a uncorrelated way: $I(\bm{z})=(\frac{z_x}{|z_x|},\frac{z_y}{|z_y|})$.  For sake of simplicity we use $d=2$ but results could be extended to more dimensions.

It follows from the previous analysis that the dynamics of correlated and uncorrelated interactions considered here lead to consensus with the same limit opinion, namely the initial mean opinion. We will show however that the time to consensus is different in both models.

We can assume (without lose of generality) that $f(t,\cdot)$ has  zero mean if initial opinions are symmetrically distributed around $x=0$. Thus, according to Proposition
\ref{mean_conseve} proved in the previous section, $f(t,\cdot)$ has mean zero for any $t$.
The variance of $f(t,.)$ then simplifies to 
$\langle \bm{x}^2 \rangle$  given by
\begin{equation}
  \langle \bm{x}^2 \rangle(t) = \int  |\bm{x}|^2 f(t,\bm{x})d\bm{x}. \end{equation}

Recalling that $\bm{I(z)}=\frac{\bm{z}}{|\bm{z}|}$, $(\bm{x}_*-\bm{x})I(\bm{x}_*-\bm{x})=|\bm{x}_*-\bm{x}|$ and Eq. \eqref{Var2} gives

\begin{equation}\label{Var3}
\frac{1}{2h} \frac{d \langle \bm{x}^2 \rangle(t)}{dt}
  = -\iint |\bm{x}_*-\bm{x}| f(t,\bm{x})f(t,\bm{x}_*) d\bm{x} d\bm{x}_*.
\end{equation}

We note that $\iint |\bm{x}_*-\bm{x}| f(t,\bm{x})f(t,\bm{x}_*) d\bm{x} d\bm{x}_*=\text{MD}^{L_2}_{\bm{x}}(t)$ is the mean absolute difference with the $L_2$ distance, so that

\begin{equation}\label{Var3_MD}
\frac{d \langle \bm{x}^2 \rangle(t)}{dt}
  = -2h \text{MD}^{L_2}_{\bm{x}}(t).
\end{equation}

In order to derive a second times this expression, we let $\Phi(t,\bm{x})=\int |\bm{x}_*-\bm{x} |f(t,\bm{x}_*)d\bm{x}_*$ so that
\begin{equation}\label{Var4}
\frac{1}{2h} \frac{d \langle \bm{x}^2 \rangle(t)}{dt}
  = -\int \Phi(t,\bm{x})  f(t,\bm{x}) d\bm{x}. 
\end{equation}
We take the second time derivative using  Eq. \eqref{WeakFormApprox_t} expression with the  time-dependent function $\Phi(t,x)$. We obtain  \begin{equation}\label{Var2_bis}
\begin{split}
-\frac{1}{2h} \frac{d^2\langle \bm{x}^2 \rangle(t)}{dt^2}=&
  \int \frac{\partial \Phi}{\partial t}(t,\bm{x})  f(t,\bm{x}) d\bm{x} \\ 
  & +2h \iint \nabla \Phi(t,\bm{x}) I(\bm{\bar{x}}-\bm{x}) f(t,\bm{x})f(t,\bm{\bar{x}})d\bm{x}d\bm{\bar{x}}.
\end{split}
\end{equation}
 To compute $\frac{\partial \Phi}{\partial t}(t,\bm{x})$, we use  Eq. \eqref{WeakFormApprox} with the test function $\phi(\bm{x}_*)=|\bm{x}_*-\bm{x}|$. Using that  
 %the gradient of the function $\phi(\bm{x})$ is 
 $\nabla\phi(\bm{x}_*)=(\bm{x}_*-\bm{x})/|\bm{x}-\bm{x}_*|$, we obtain
\begin{equation}
\begin{split}
\frac{\partial\Phi}{\partial t}(t,\bm{x})
 & = \frac{\partial}{\partial t} 
 \int |\bm{x}_*-\bm{x}| f(t,\bm{x}_*)d\bm{x}_* \\ 
& =  2h\iint \frac{(\bm{x}_*-\bm{x})}{|\bm{x}-\bm{x}_*|} I(\bm{\bar{x}}-\bm{x}_*)f(t,\bm{\bar x}) f(t,\bm{x}_*)d\bm{\bar x} d\bm{x}_*.
\end{split}
\end{equation}

back to \eqref{Var2_bis} with  $\nabla \Phi(t,\bm{x})=\int \frac{\bm{x}-\bm{x}_*}{|\bm{x}-\bm{x}_*|}f(t,\bm{x}_*)d\bm{x}_*$ and 
$I(\bm{z})=\bm{z}/|\bm{z}|$, we get 

\begin{equation}
\begin{split}
 \frac{1}{4h^2} \frac{d^2\langle \bm{x}^2 \rangle(t)}{dt^2}
& =  \iiint \Big( \frac{\bm{x}-\bm{x}_*}{|\bm{x}-\bm{x}_*|}\frac{\bm{\bar{x}}-\bm{x}_*}{|\bm{\bar{x}}-\bm{x}_*|}+\frac{\bm{x}_*-\bm{x}}{|\bm{x}_*-\bm{x}|}\frac{\bm{\bar{x}}-\bm{x}}{|\bm{\bar{x}}-\bm{x}|}\Big)f(t,\bm{ x})f(t,\bm{\bar x}) f(t,\bm{x}_*)d\bm{x}d\bm{\bar x} d\bm{x}_*  \\ 
& =  2\iiint \Big( \frac{\bm{x}_*-\bm{x}}{|\bm{x}_*-\bm{x}|}\frac{\bm{\bar{x}}-\bm{x}}{|\bm{\bar{x}}-\bm{x}|}\Big)f(t,\bm{ x})f(t,\bm{\bar x}) f(t,\bm{x}_*)d\bm{x}d\bm{\bar x} d\bm{x}_*.
\end{split}
\end{equation}

Eventually for symmetry reasons, 
\begin{equation}\label{Der2_bis}
\begin{split}
\frac{3}{8h^2}\frac{d^2\langle \bm{x}^2 \rangle(t)}{dt^2}
&= \iiint \Big( \frac{\bm{x_*}-\bm{x}}{|\bm{x_*}-\bm{x}|}
                \frac{\bm{\bar x}-\bm{x}}{|\bm{\bar x}-\bm{x}|}
+ \frac{\bm{\bar {x}}-\bm{x}_*}{|\bm{\bar {x}}-\bm{x}_*|} \frac{\bm{x}-\bm{x}_*}{|\bm{x}-\bm{x}_*|} \\
& \hspace{1cm} + \frac{\bm{x}_*-\bm{\bar x}}{|\bm{x}_*-\bm{\bar x}|} \frac{\bm{x}-\bm{\bar x}}{|\bm{x}-\bm{\bar x}|} \Big)
f(t,\bm{x})f(t,\bm{x}_*)f(t,\bm{\bar x})
d\bm{\bar x} d\bm{x} d\bm{x}_*.
\end{split}
\end{equation}

Notice that given three different points $\bm{x},\bm{x}_*,\bm{\bar{x}}\in \mathbb{R}^2$, (which form a triangle in the plane) the addition in the integral is the sum of the cosines of the angles of this triangle.
Since those angles must be between 0 and $\pi/2$ and sum $\pi$,
 the sum of the cosines is maximum equal to $3/2$ for an equilateral triangle, and is minimum equals to 1 when one angle is 0 and the others are $\pi/2$. 

Notice also that the probability of not choosing 3 distincts points among $N$ is negligible neing  of order $1/N$. 
We thus deduce from \eqref{Der2_bis} that 

 \begin{equation}\label{Der2R2}
  \frac83 h^2
 \le \frac{d^2\langle \bm{x}^2 \rangle(t)}{dt^2} 
 \le  4h^2.
\end{equation}

Integrating w.r.t $t$, 
using Eq. \eqref{Var3_MD} to calculate $\frac{d\langle \bm{x}^2 \rangle}{dt}(0)$, we obtain 

\begin{equation}\label{ECM_2D}
\frac43 h^2 t^2 - 2 h \text{MD}^{L_2}_{\bm{x}}(0) t  +\langle \bm{x}^2 \rangle(0)
\le \langle \bm{x}^2 \rangle(t)  \le 2 h^2 t^2 - 2h  \text{MD}^{L_2}_{\bm{x}}(0) t  +\langle \bm{x}^2 \rangle(0).
\end{equation}

If we assume than $\langle \bm{x}^2 \rangle(t)$ is the mean between the lower and upper bound : 

\begin{equation}\label{ECM_2D_2}
\langle \bm{x}^2 \rangle(t) \approx \frac{5}{3} h^2 t^2 -2h \text{MD}^{L_2}_{\bm{x}}(0) t +\langle \bm{x}^2 \rangle(0).
\end{equation}
This assumptions is reasonable if we consider that we are averaging over many agents and therefore,  three selected agents should have angles probabilities equally distributed in the opinion space.

For the uncorrelated case we use the result derived in Appendix \ref{ap:Uncorrelated model} which shows that the dynamic is equivalent to the 1D case in each coordinate,  and therefore the consensus time is equivalent to the 1D model. For this model the master equation can be written as 
Eq.\eqref{Uncorrelated_WF_ap_2} (See Appendix \ref{ap:Uncorrelated model}) taking the function $\phi(\bm{x})=\frac12 \bm{x}^2=\frac12 ( x^2+y^2)$. 
%$\frac{d \phi(\bm{x})}{d x}=x$ and  %$\frac{d \phi(\bm{x})}{d y}=y$. 
We get
\begin{equation}\label{Var_2dim}
\begin{split}
\frac{1}{2h} \frac{d\langle \bm{x}^2 \rangle(t)}{dt}&=- \iint (|x_{*}-x|+|y_{*}-y|)f(t,\bm{x}_*)f(t,\bm{x})d \bm{x}_* \bm{x} \\
&= -\text{MD}_{x}(t)-\text{MD}_{y}(t)=-2\text{MD}_{x}(t)=-\text{MD}^{L_1}_{\bm{x}}(t),
\end{split}
\end{equation}
where $\text{MD}_{x}(t)=\text{MD}_{y}(t)$ is the mean absolute difference for the projection to one dimension, and $\text{MD}^{L_1}_{\bm{x}}(t)$ is the mean absolute difference with the $L_1$ distance. 

Using a deduction similar to the correlated case but considering each coordinate as the 1D model, we get 
\begin{equation}\label{Der2_bis_x_y}
\begin{split}
\frac{3}{8 h^2}\frac{d^2\langle {x}^2 \rangle(t)}{dt^2}
&= 2\iiint_{\R^3} \Big( \frac{{x_*}-{x}}{|{x_*}-{x}|}
   \frac{{\bar x}-{x}}{|{\bar x}-{x}|}
+ \frac{{\bar x}-{x}_*}{|{\bar x}-{x}_*|} \frac{{x}-{x}_*}{|{x}-{x}_*|} \\
& \hspace{1cm} + \frac{{x}_*-{\bar x}}{|{x}_*-{\bar x}|} \frac{{x}-{\bar  x}}{|{x}-{\bar x}|} \Big)
f_t({x})f_t({x}_*)f_t({\bar x})
d{\bar x} d{x} d{x}_*.
\end{split}
\end{equation}
Even though  there are no triangle in the real axis as in the previous deduction, we can think 
of 3 distinct points ${x},{x_*},{\bar{x}}\in \R$ as  forming 
a degeneate triangle with one angle equals to $\pi$ and the others two equal to zero,  making the sum of their cosines equals to one. Thus, 
\begin{equation}\label{Der2R1}
 \frac{d^2\langle \bm{x}^2 \rangle(t)}{dt^2} = \frac{16}{3} h^2. 
\end{equation}
%\medskip
Integrating twice yields
\begin{equation}\label{ECM_1D}
\langle \bm{x}^2 \rangle(t) = \frac{8}{3} h^2 t^2 - 2 h  \text{MD}^{L_1}_{\bm{x}}(0) t+\langle \bm{x}^2 \rangle(0). 
\end{equation}

The consensus time can be calculated analytically from the minimum of Eqs. \eqref{ECM_2D_2} and \eqref{ECM_1D}, resulting in the following expressions: 

\begin{equation}\label{ConvergenceTime}
t_c =
\left\{  \begin{array}{ll}
 \displaystyle {\frac{ 3 \text{MD}^{L_1}_{\bm{x}}(0)}{8h}} & \quad  \text{for the uncorrelated model, }\\
  \displaystyle \frac{  3\text{MD}^{L_2}_{\bm{x}}(0)}{ 5h} & \quad  \text{for the correlated model. }
\end{array}\right.
\end{equation}
\begin{figure}[ht]
\centering
\includegraphics[width=\textwidth]{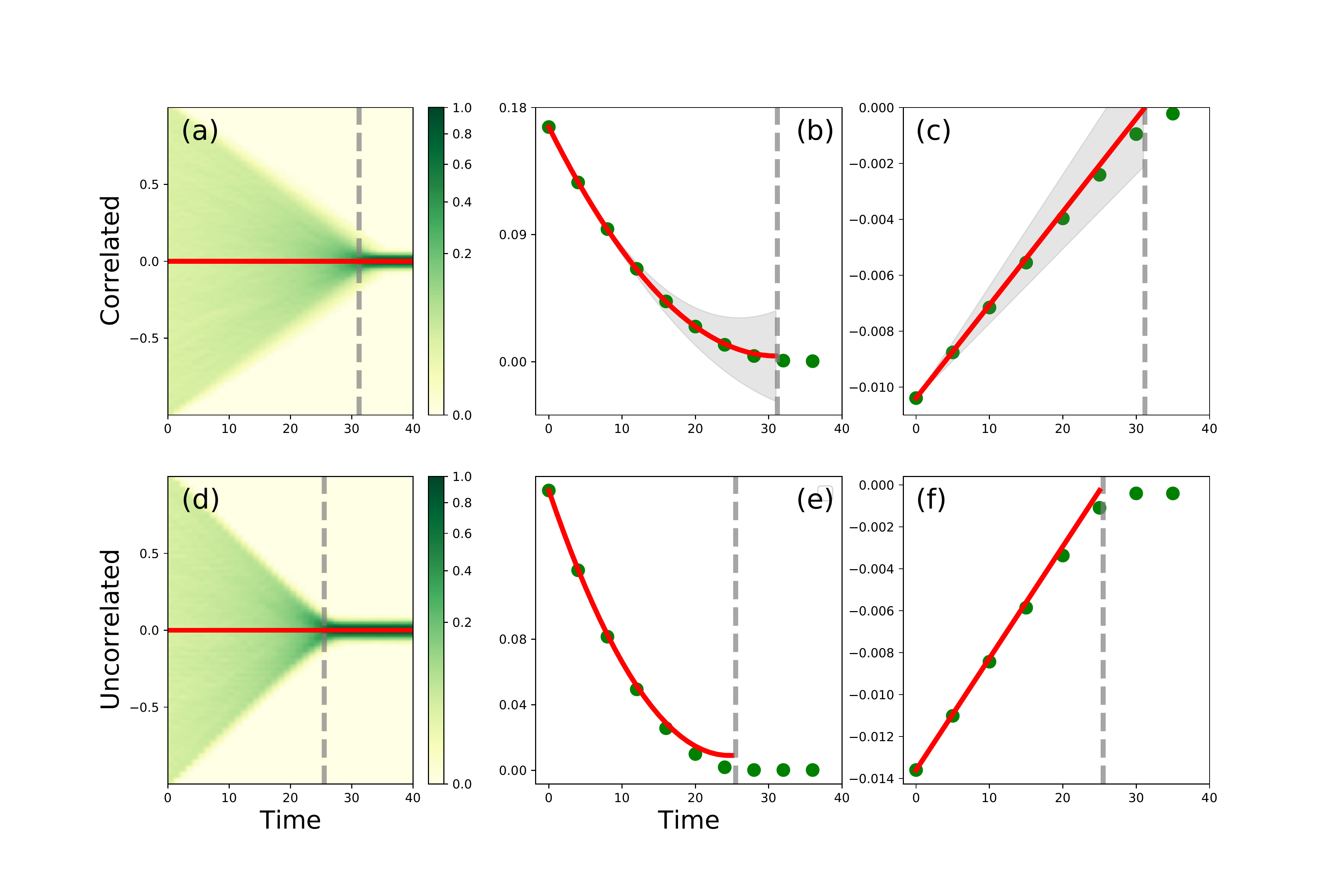}
\caption{ System evolution simulation for $N=1000$ agents, $h=0,01$ and initially uniform distribution. Panels (a) and (d) show the projection of the density opinion function onto the x-axis. Middle panels ((b) and (e)) display simulations (green circles) and theoretical  estimation (red lines) of $\langle \bm{x}^2 \rangle(t)$. In right panels ((c) and (f)) we can see simulations (green circles) and theoretical  estimation (red lines) of $\text{MD}^{L_2}_{\bm{x}}(t)$ and $\text{MD}^{L_1}_{\bm{x}}(t)$ respectively (see the text for the details of the computations). Notice than top panels show the behavior of the correlated model meanwhile bottom ones the results corresponding to the  uncorrelated model.
}
\label{density_MSE}
\end{figure}

In panels (a) and (d) of figure (\ref{density_MSE}) we show  the opinion  density function projected to the $x$ coordinate (equivalent results are obtained in the $y$ coordinate), and we can see the difference for time convergence. In panels (b) and (e) we show  the time evolution of
$\langle \bm{x}^2 \rangle(t)$ calculated (i) from  the simulation according to $\frac{1}{N}\sum_{i=1}^N x^i(t)$ (green circles), and (ii) from Eqs. \eqref{ECM_2D_2} and \eqref{ECM_1D} (red lines). In panels (c) and (f) we show the time evolution of  $\text{MD}^{L_2}_{\bm{x}}(t)$ and $\text{MD}^{L_1}_{\bm{x}}(t)$ calculated from  the simulation according to $\frac{1}{N^2}\sum_{i=1,j=1}^{N,N} |\bm{x}^i(t)- \bm{x}^j(t)|$ and $\frac{1}{N^2}\sum_{i=1,j=1}^{N,N} |x^i(t)- x^j(t)|+|y^i(t)- y^j(t)|$ respectively (green circles), and (ii) derivating Eqs. \eqref{ECM_2D_2} and \eqref{ECM_1D} (red lines). Analytical  consensus time $t_c$ are plotted in dash gray lines. We observe an excellent agreement.

Notice that $\text{MD}^{L_1}_{\bm{x}}(0)$ and $\text{MD}^{L_2}_{\bm{x}}(0)$ are different for both models. 
Indeed it follows from  $|.|_{L^1}\geq |.|_{L^2} \geq \frac{1}{\sqrt2} |.|_{L^1}$ that  
\begin{equation}\text{MD}^{L_1}_{\bm{x}}(t) \geq \text{MD}^{L_2}_{\bm{x}}(t) \geq \frac{1}{\sqrt2} \text{MD}^{L_1}_{\bm{x}}(t).  
\end{equation}
 
This is also valid for $t=0$ and therefore  allow us to compare both convergence times. For instance, in the simulations performed in figures (\ref{density_MSE}) and (\ref{ConvergenceTime_1D_vs_2D_SimulTeo_Normalized}) where agents start from an uniform distribution in $[-0.5; 0.5]\times [-0.5;0.5]$ we get, independently from $h$ and $N$, $\text{MD}^{L_1}_{\bm{x}}(0) \approx 0.66$ and $\text{MD}^{L_2}_{\bm{x}}(0) \approx 0.52$, making the uncorrelated model reach the consensus faster.

Consensus time $t_c$ can be also estimated numerically  from the agent-based simulation when:
\begin{equation}\label{tc_Simul}
|\langle \bm{x}^2 \rangle(t)-\langle \bm{x}^2 \rangle(t-1)| =
 \Big|\frac1N \sum_{i=1}^N |\bm{x^i}(t)|^2 - \frac1N \sum_{i=1}^N |\bm{x^i}(t-1)|^2 \Big|
< h^2.
\end{equation}

\begin{figure}[ht]
\includegraphics[width=\textwidth]{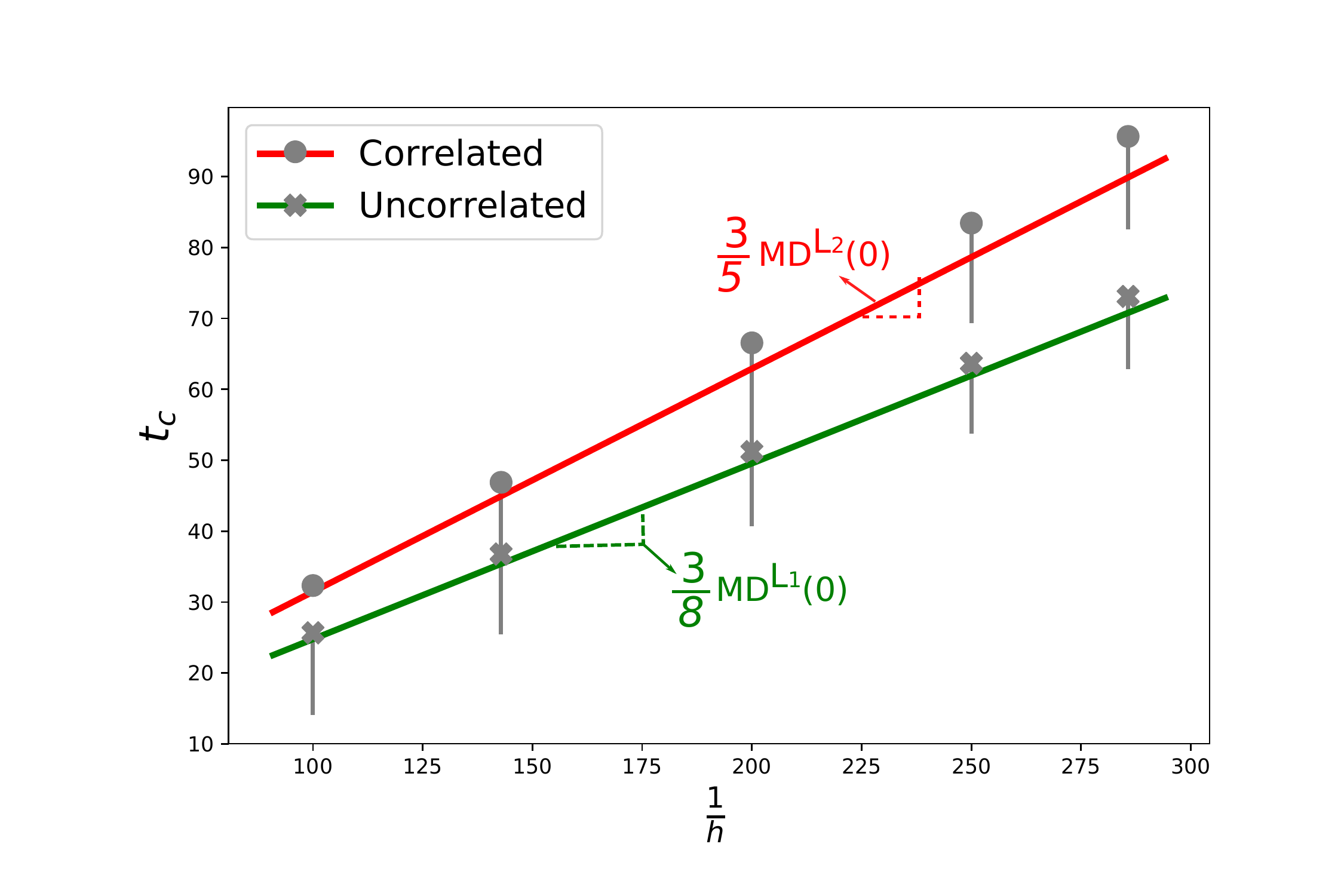}
\caption{\label{ConvergenceTime_1D_vs_2D_SimulTeo_Normalized} Proportionality between convergence time over $\langle \bm{x}^2 \rangle$ and $\frac{1}{h}$ for both correlated and uncorrelated models. Analytical results are in lines and simulation in symbols. Simulations use an initial uniform distribution and varies $N=1000, 1500, 2000$   }
\end{figure}

Given that the time to consensus, $t_c$,  depends only of $h$ and the initial distribution, in figure (\ref{ConvergenceTime_1D_vs_2D_SimulTeo_Normalized}) we plot (with lines) the analytical expression for $tc$ as a function of $\frac{1}{h}$. As expected, a linear relation with slopes $\frac{3}{8}\text{MD}^{L_1}_{\bm{x}}(0) $ for the uncorrelated case and to $\frac{3}{5}\text{MD}^{L_2}_{\bm{x}}(0) $ for the correlated one are obtained. We compare the analytical result with agent-based simulation (with symbols). We simulate for different $N=1000, 1500, 2000$ and initial uniform distribution obtain identical results.  We can observe here a slight mismatch between simulations and analytical results. This happens because our analytical approach fails when the opinions distribution become a Dirac delta and therefore the simulation consensus time is overestimated. In order to get a lower bound for the consensus time obtained from simulations,  we estimate the time when the second derivative is not longer constant (roughly, when it felt apart about $h^2$ for the theoretical value), i.e.,
\begin{equation}\label{lower_bound}
\frac{d \text{MD}_{\bm{x}}(t) }{dt}=\frac{d }{dt}\sum_{i=1,j=1}^{N,N} |\bm{x^i}(t)-\bm{x^j}(t)| <t_{c}-h^2.
\end{equation}  
In this way, we can observe an excellent agreement between analytical results and simulations within the error bars.

\section{Discussions}

In this work we present a general framework to study analytically continuous opinion models with general interactions functions in a multidimensional space.  In this context we study the mean and the variance of the opinion distribution of a set of agents along a time and find geometrical differences between correlated and uncorrelated models that lead to differences in the dynamic and in the convergences times.

We deduce the master equation in the general case and prove that for symmetrical interactions the mean of the opinion of the agents remains constant, and if the interactions are also attractive, the variance goes to $0$ and agent's opinions converges to consensus.

Additionally, we get an explicit expression for the variance of the opinion distribution, and we show that it depends quadratically on time, allowing the  calculation of the convergence time for two particular models: a first one when interacting agents approach a distance $h$ in the direction that shortens their distance (correlated opinion components) and a second one where  agent's approach each other a distance $h$ in each coordinate. 

The ratio of the convergence times $t_c^c/t_c^u$ (supra index c for correlated, u for uncorrelated) reads like this:
\begin{equation} 
\frac{t_c^c}{t_c^u} =\frac{8}{5}\frac{\text{MD}^{L_2}_{\bm{x}}}{\text{MD}^{L_1}_{\bm{x}}}.
\end{equation}

On one hand, the factor $\frac{\text{MD}^{L_2}_{\bm{x}}}{\text{MD}^{L_1}_{\bm{x}}}$ is bounded in $[\frac{1}{\sqrt2},1]$. We interpret this result with the difference in opinion shift of agents after interactions, since for the correlated case each agent close its partner a distance $h$, meanwhile it is $\sqrt{2}h$ in the uncorrelated one. On the other hand, the $8/5$ factor can be seen as a geometric phenomenon that comes from the difference between the quadratic term of expressions \eqref{ECM_2D_2} and \eqref{ECM_1D}. For these calculations the sum of the cosines of all the angles formed by three different agents are involved. Meanwhile in the correlated cases it involves triangles of opinions in $2D$ space, this triangles are confined to  one dimension in the uncorrelated case, as was explained above.

Even though a faster convergence of uncorrelated model was expected, given that agents moves $\sqrt2 h$ at each interaction in comparison with the correlated one where they move $h$, we notice a non-trivial relation between both convergence times, as was analyzed in previous paragraph.

\section*{Acknowledgement}

This work was supported by Grants No. 20020130100582BA   (UBACyT) and No. PICT-201-0215(ANPCyT), Argentina.

\medskip

\begin{appendices}

\section{Weak Formulation}
\label{ap:Weak Formulation}

In this section we write the master equation \eqref{EcuacionMaestra} in the weak form. 

To get a $\bm{x}$ position we have to choose two agents where interaction makes one of them adopt the $\bm{x}$ so is the probability of interaction of two previous agents  $\bm{{x^{p}}}$ and an $\bm{{x_*^{p}}}$ when new opinion for first agent is $\bm {x}$. 

\begin{equation}\label{gain}
G(\bm{x})=\int_{x_*^{p}} P_{(\bm{x^{p}},\bm{x_*^{p}})\rightarrow {(\bm{x},\bm{x_*})}}f(\bm{x^{p}})f(\bm{x_*^{p}})d\bm{x_*^{p}}.
\end{equation}

Where $P_{(\bm{x^{p}},\bm{x_*^{p}})\rightarrow {(\bm{x},\bm{x_*})}}$ is the probability that $(\bm{x^{p}},\bm{x_*^{p}})$ get $(\bm{x},\bm{x_*})$ after an interaction. Allowing all the agents interact between them with same probability, $P$ is only $1$ when $\bm{x}=\bm{x^p}+I(\bm{x_*^p}-\bm{x^p})$.

To left a position we have to choose an agent of opinion $\bm{x}$ interacts whit some one who moves it. 
We get 

\begin{equation}\label{lost}
L(\bm{x})=\int_{x_*}   P_{(\bm{x},\bm{x_*})\rightarrow {(\bm{\tilde{x}},\bm{\tilde{x}_*})}}f(\bm{x})f(\bm{x_*})d\bm{x_*}.
\end{equation}

Given a test function $\phi:\R^n \rightarrow \R$ and integrating in the opinion space, we get 

\begin{equation}\label{Weak_Form_ap_1}
 \begin{split}
 \frac{d}{dt}\int \phi(x)f(t,\bm{x})d\bm{x} = &  
 2 \Big(\iint_{x_*^{p},x} \phi(\bm{x})  P_{(\bm{x^{p}},\bm{x_*^{p}})\rightarrow {(\bm{x},\bm{x_*})}}f(\bm{x^{p}})f(\bm{x_*^{p}})d\bm{x_*^{p}} 
d\bm{x}\\
 &-\iint_{x_*,x}\phi(\bm{x})   P_{(\bm{x},\bm{x_*})\rightarrow {(\bm{\tilde{x}},\bm{\tilde{x}_*})}}f(\bm{x})f(\bm{x_*})d\bm{x}d\bm{x_*}\Big).
 \end{split}
\end{equation}

By using that $P_{(\bm{x^{p}},\bm{x_*^{p}})\rightarrow {(\bm{x},\bm{x_*})}}=\delta_{\bm{x}=\bm{x^p}+I(\bm{x_*^p}-\bm{x^p})}$,

\begin{equation}
\begin{split} 
 \label{Weak_Form_ap_2}
 \frac{d}{dt}\int \phi(x)f(t,\bm{x})d\bm{x}
=  
& 2 \iint_{x_*^{p},x^{p}} \phi(\bm{x^p}+I(\bm{x_*^p}-\bm{x^p})  f(\bm{x^{p}})f(\bm{x_*^{p}})d\bm{x_*^{p}}d\bm{x^{p}} 
\\
&-\iint_{x^*,x}\phi(\bm{x})   f(\bm{x})f(\bm{x_*})d\bm{x}d\bm{x_*},
\end{split}
\end{equation} and 
by changing variables from $(\bm{x_*^{p}},\bm{x^p})$ to $(\bm{x_*},\bm{x})$ in gain term we get \eqref{WeakForm}.

If test function is time depending, $\phi(t,\bm{x}):(\R,\R^n)\rightarrow \R$ then

\begin{equation}
\begin{split}
\label{Weak_Form_ap_3}
 \frac{d}{dt}\int \phi(t,\bm{x})f(t,\bm{x})d\bm{x}
=  \int \frac{d \phi(t,\bm{x})}{dt} f(t,\bm{x}) d \bm{x} + \int \phi(t,\bm{x}) \frac{d f(t,\bm{x})}{dt} d\bm{x},
\end{split}
\end{equation}

At second term it can be used the eq. \eqref{WeakFormApprox} and we get \eqref{WeakFormApprox_t}

\section{Uncorrelated model}
\label{ap:Uncorrelated model}

In this appendix we will show that uncorrelated model is equivalent to a 1D model en each coordinate.

 Using the updating rule uncorrelated in \eqref{WeakForm} assuming $\bm{x}=(x,y)$ we obtain
\begin{equation}\label{Uncorrelated_WF_ap}
 \begin{split}
& \frac{d}{dt}\int \phi(\bm{x})f(t,\bm{x})d\bm{x}
=  \\
&  = \iint [\phi(x+hI(x^*-x),y+hI(y^*-y))-\phi(\bm{x})]f(t,\bm{x})f(t,\bm{x_*})
d\bm{x}d\bm{x_*}
\end{split}
\end{equation}

which can be approximated when $h\ll 1$ by
\begin{equation}\label{Uncorrelated_WF_ap_2}
\begin{split}
& \frac{1}{h} \frac{d}{dt}\int \phi(\bm{x})f(t,\bm{x})d\bm{x} \\
 & =  \iint \Big(\frac{\p\phi}{\p x}(\bm{x})I(x^*-x)+\frac{\p\phi}{\p y}(\bm{x})I(y^*-y)
	\Big)f(t,\bm{x})f(t,\bm{x_*})d\bm{x}d\bm{x_*}.
\end{split}
\end{equation}
Consider the vector field $\bm{v}=(v_x,v_y)$ defined by
\begin{equation}
\begin{split}
v_x(\bm{x})=v[f](x)=\int I(x^*-x) f(t,\bm{x_*})d\bm{x_*} \\
v_y(\bm{x})=v[f](y)=\int I(y^*-y) f(t,\bm{x_*})d\bm{x_*} 
\end{split}
\end{equation} 

To simplify this expression we introduce the marginals $F_x(t,x)$ and $F_y(t,y)$  of $f(t,\bm{x})$, namely
\begin{equation}\label{DefMarginals}
 F_x(t,x) =  \int^{\infty}_{-\infty}f(t,x,y) )dy, \qquad
 F_y(t,y) =  \int^{\infty}_{-\infty} f(t,x,y)dx.
\end{equation}
Notice that $F_x(t,x)$ and   $F_y(t,y)$ are probability measures on $\mathbb{R}$.
Indeed they are the projection of  $f(t,\bm{x})$ on the coordinate axis.
Then
\begin{equation}\label{VF2}
 v_x(\bm{x})=v[F_x](t,x)=\int I(x_{*}-x) F_x(t,x_{*})dx_{*}  
\end{equation}
and equivalent for $v_y$.
We can then rewrite the Master equation as
\begin{equation}
\begin{split}
\label{MasterEqWeak}
 \frac{d}{dt}\int \phi(\bm{x})f(t,\bm{x})d\bm{x}
 = 2h \int \Big(\frac{\p\phi}{\p x}(\bm{x})v_x(\bm{x})+\frac{\p\phi}{\p y}(\bm{x})v_y(\bm{x})
	\Big)f(t,\bm{x})d\bm{x}.
\end{split}
\end{equation}

This is the weak formulation of
\begin{equation}\label{1D_equation}
\frac{\partial f}{\partial t}+2h\text{div}(vf) = 0.
\end{equation}
For instance in the particular case $I(x)=x/|x|$ and $I(y)=y/|y|$, we have
\begin{equation} 
v_x(x) = v[F_x](t,x) = \int_{\{s>x\}} F_x(t,s)ds - \int_{\{s<x\}} F_x(t,s)ds, 
\end{equation}                                                                                               
and equivalent for the $y$ coordinate, which is exactly the vector field obtained in the 1D model in \cite{PSB}.

Since the interaction rule for uncorrelated model treats both coordinates $x$ and $y$ as independent, it is reasonable to conjecture that the marginals $F_x$ and $F_y$ will both evolve  following the 1D dynamic.

We prove this is indeed the case.
Take a test function $\phi=\phi(x)$ and notice that
$\int_{\mathbb{R}^2} \phi(x)f(t,\bm{x})d\bm{x}= \int_{\mathbb{R}} \phi(x)F_1(t,x)dx$.
Then \eqref{MasterEqWeak}  becomes
\begin{equation}\label{WeakFormMarg}
\begin{split}
h \frac{d}{dt} \int \phi(x)F_1(t,x)dx
&= \int d\bm{x}f(t,\bm{x}) \phi'(x)v_x(x) \\
&= \int F_1(t,x) \phi'(x)v_1(x)dx. 
\end{split}
\end{equation}

The same applies to $F_y$.
Thus $F_x$ and $F_y$ both solve the conservation law in $\mathbb{R}$ given by
\begin{equation} \label{EquationF}
 \frac{\partial F_x}{\partial t}+2h\frac{\partial}{\partial x}(v_x F_x) = 0 
\end{equation}
as expected. In view of \eqref{VF2}, we can rewrite \eqref{WeakFormMarg} as
\begin{equation}
\begin{split}\frac{d}{dt} \int \phi(x)F_k(t,x)dx
= \hspace{4cm}\\
\hspace{2cm} 2h \iint F_x(t,x) \phi'(x) G(x_{*}-x) F_x(t,x_*)dx_{*}dx.  
\end{split}
\end{equation}
Thus $F_x(t,.)$ satisfy Eq. \eqref{WeakFormApprox} obtained in the previous section
when studying the 2D dynamic. An equivalent reasoning is valid for $F_y(t,.)$. 
It follows that as $t\to +\infty$, $F_x(t)$ and $F_y(t)$ converges to the Dirac mass located at the mean value $m_x(0)$ and $m_y(0)$  of $F_x(0)$ and $F_y(0)$ respectively, i.e. $F_x(t)\to \delta_{m_x(0)}$ and where
\begin{equation}
m_x(0):=\int x F_x(0,x) dx = \int x f(0,\bm{x}) d\bm{x}. \end{equation}

and equivalent for $F_y(t)$ As a consequence
\begin{equation} 
 f(t,\bm{x}) \to \delta_{\bm{m}(0)} \qquad \text{as $t\to +\infty$,}\end{equation}

where $\bm{m}(0)=(m_x(0),m_y(0))$ is the initial mean opinion.

\end{appendices}

\end{document}